\newcommand{\bk}{\mathbf{k}}
\newcommand{\bA}{\mathbf{A}}
\newcommand{\be}{\mathbf{e}}
\newcommand{\bp}{\mathbf{p}}
\newcommand{\bx}{\mathbf{x}}
\newcommand{\by}{\mathbf{y}}
\newcommand{\bP}{\mathbf{P}}
\newcommand{\cD}{\mathcal{D}}
\newcommand{\cH}{\mathcal{H}}
\newcommand{\cK}{\mathcal{K}}
\newcommand{\cF}{\mathcal{F}}
\newcommand{\cL}{\mathcal{L}}
\newcommand{\BR}{\mathbb{R}}
\newcommand{\BN}{\mathbb{N}}
\newcommand{\tensor}{\otimes}
\newcommand{\ep}{\epsilon}
\newcommand{\ome}{\omega}
\def\norm#1{\| {#1} \|}
\def\ome{\omega}
\def\ep{\epsilon}
\def\e{{\rm e}}
\def\i{{\rm i}}
\theoremstyle{plain}
\newtheorem{theorem}{Theorem}[section]
\newtheorem{lemma}[theorem]{Lemma}
\theoremstyle{definition}
\theoremstyle{remark}
\title{  Binding condition for a general class of quantum field Hamiltonians }
\author{C. G\'erard}
\address{D\'epartement de Math\'ematiques, Universit\'e de Paris XI, 91405 Orsay Cedex France}
\email{christian.gerard@math.u-psud.fr}
\author{I. Sasaki}
\thanks{I. S.'s work was partly supported by Research supported by KAKENHI Y22740087, 
and was performed through the Program for Dissemination of Tenure-Track System  
funded by the Ministry of Education and Science, Japan}
\address{Fiber-Nanotech Young Researcher Empowerment Center, Shinshu University,
Asahi 3--1--1, Matsumoto 390--8621, Japan.}
\email{isasaki@shinshu-u.ac.jp}
\date{\today}
\keywords{ground state energy, Pauli-Fierz model, binding condition}
\subjclass{28D10,58J51}
\begin{document}
\maketitle

\begin{abstract}
We consider a system of a quantum particle interacting with a quantum field and an external potential $V(\bx)$.
The Hamiltonian is defined by a quadratic form $H^V = H^0 + V(\bx)$, where 
$H^0$ is a quadratic form which preserves the total momentum. 
$H^0$ and $H^V$ are assumed to be bounded from below. 
 We give a criterion for the positivity of the binding energy $E_\mathrm{bin} = E^0-E^V$,
 where $E^0$ and $E^V$ are the ground state energies of $H^0$ and $H^V$.
As examples of the result, the positivity of the binding energy of the semi-relativistic Pauli-Fierz model and  Nelson type Hamiltonian is proved.
\end{abstract}



\setlength{\baselineskip}{15pt}

\section{Introduction}
We consider a Hamiltonian of the form
\begin{align}
  H^V = H^0  + V\tensor I, \label{hamil}
\end{align}
acting on the Hilbert space $\cH = L^2(\BR^d;d\bx) \otimes \cK$,
where $\cK$ is a Hilbert space, $H^0$ is a semi-bounded quadratic form on $\cH$
and $V$ is  the operator of multiplication by a real function $V(\bx)$ in $L^2(\BR^d;d\bx)$.
We are interested in the ground state energy $E^V$ of $H^V$.
The {\em binding energy} of the system is defined by
\begin{align}
  E_\mathrm{bin} = E^0 - E^V.
\end{align}
In this paper, we give a criterion for $E_\mathrm{bin}$ to be strictly positive.

Hamiltonians of the form \eqref{hamil} appear in models  of a quantum particle interacting with a quantum field.
One of the important examples is the {\em Pauli-Fierz Hamiltonian}, for which  $d=3$,
$\cK$ is the bosonic Fock space over $L^2(\BR^3\times\{1,2\})$ and
\begin{align}
  H^0 = H_\mathrm{PF}^0 := \frac{1}{2m}(\bp\tensor I+\sqrt{\alpha}\bA(\bx))^2 + I \tensor  H_f
\end{align}
where $H_f$ is the free photon energy, $\alpha$ is the fine structure constant, $\bA(\bx)$ is the quantized vector potential
and $V(\bx)$ is the nuclear potential (see \cite{Griesemer-Lieb-Loss:2001}).
The positivity of the binding energy is used as a hypothesis to establish the existence of  a ground state
of the Pauli-Fierz model in  \cite{Griesemer-Lieb-Loss:2001}.  In  \cite{Griesemer-Lieb-Loss:2001} the 
 positivity of the binding energy is obtained by assuming that 
\begin{align}
  \frac{\bp^2}{2m} +V(\bx)
\end{align}
has a negative energy ground state.
In this paper, we generalize the method developed in \cite{Griesemer-Lieb-Loss:2001} and 
apply it to several types of quantum field Hamiltonians such that the semi-relativistic Pauli-Fierz Hamiltonian,  
 the Pauli-Fierz Hamiltonian with dipole approximation and Nelson type Hamiltonians.

\section{Definitions and Main Results}
If $\cH$ is a Hilbert space we denote by $(\cdot | \cdot)_{\cH}$ the scalar product on $\cH$. If $A$ is a quadratic form on $\cH$, we denote by $Q(A)$ its form domain and the value of $A$ will be denoted by $(\Psi | A \Phi)_{\cH}$ for $\Psi, \Phi\in Q(A)$. We use the same notation for the quadratic form associated to a self-adjoint operator $A$, with domain $Q(A)= {\rm Dom}(|A|^{\frac{1}{2}})$.

We now formulate  the hypotheses of Thm. \ref{theorem} below.

Let $L^2(\BR^d;d\bx)$ be the space  of square integrable functions on $\BR^d$ 
with variable $\bx=(x_1,\dots,x_d)$,
and $\cK$ be a separable complex Hilbert space. We denote by  $\bp=(p_1,\cdots,p_d)=-{\rm i}\nabla_\bx$  the momentum operator on $L^2(\BR^d;d\bx)$ 
The Hilbert space of the total system is:
\begin{align*}
 \cH := L^2(\BR^d;d\bx) \tensor \cK
\end{align*}

We fix a quadratic form 
 $H^0$ on $\cH$ and an 
 external potential $V:\BR^d\to \BR$ which  is a real Borel measurable function. The multiplication by $V(\bx)$ is denoted by the same symbol. 
 
The Hamiltonian of the system is obtained from the  quadratic form on $\cH$ defined by
\begin{align*}
 H^V := H^0 + V.
\end{align*}
We assume the following conditions:
\begin{enumerate}
 \item[\bf(H.1)] There exists a dense domain $\cD_0$ such that
\begin{align*}
 \cD_0  \subseteq Q(H^0) \cap Q(V)
\end{align*}
and $H^V$ and $H^0$ are closable and bounded from below on $\cD_0$.
 \item[\bf(H.2)] There exist  a vector of commuting self-adjoint operators $\bP_f=(P_{f,1},\cdots P_{f,d})$ on 
$\cK$ such that $H^0$ commutes with
\begin{align*}
  &  \bP :=  (P_1,\cdots, P_d), \\
  &  P_j = \overline{p_j\tensor I + I \tensor P_{f,j}},
\end{align*}
namely, for all $\bk\in \BR^d$, $\e^{\i\bk\cdot \bP}\cD_0 = \cD_0$ and 
it holds that
\begin{align*}
  (\e^{\i\bk\cdot \bP}\Psi | H^0\e^{\i\bk\cdot \bP}\Phi) &= (\Psi | H^0 \Phi)
\end{align*}
for all $\Psi, \Phi\in \cD_0$ and $\bk \in\BR^d$.
\end{enumerate}
From (H.1), $H^V$ and $H^0$ are closable on $\cD_0$,  and we denote by $\bar{H}^V$, $\bar{H}^0$
the  self-adjoint operators associated to the closure of $H^{V}$, $H^{0}$. Let 
\begin{align*}
 E^V &:= \inf \sigma(\bar{H}^V) 
= \inf_{\Psi \in \cD_0, \norm{\Psi}=1 } (\Psi |H^V\Psi)_{\cH}, \\
 E^0 &:=\inf \sigma(\bar{H}^0) 
 = \inf_{\Psi\in \cD_0, \norm{\Psi}=1} (\Psi | H^0\Psi)_{\cH}.
\end{align*}
be the ground state energies.
The key assumption of the main theorem is the following:
\begin{itemize}
 \item[\bf (H.3)] There exist a measurable real function $K(\bk)$ such that 
\begin{align*}
 \frac{1}{2}\{ \Omega(\bk) + \Omega(-\bk) -2 \Omega(0) \} \leq K(\bk) \hbox{ on }\cD_{0}, \ \forall \ \bk\in\BR^d,
\end{align*}
 where $\Omega(\bk):=\e^{-\i\bk\cdot \bx} H^0 \e^{\i\bk\cdot \bx}$.
\end{itemize}
We set 
\begin{align*}
  h := K(\bp) + V
\end{align*}
which is a quadratic form on $L^2(\BR^d;d\bx)$. We assume that
\begin{enumerate}
 \item[\bf(H.4)] There exists a non-trivial subspace $ \cD_1$ of $L^2(\BR^d;d\bx)$ with $\cD_{1} \subset Q(K(\bp)) \cap Q(V)$ such that
for all $f\in \cD_1$ and $\Psi\in \cD_0$,  $f(\bx)\Psi \in \cD_0$. 
Moreover $\cD_1$ is invariant under the complex conjugation, i.e.  $\bar{f}\in \cD_1$ for all $f \in \cD_1$.
\end{enumerate}
We define 
\begin{align*}
  e_0 := \inf_{f\in \cD_1, \norm{f}=1} (f|hf)_{L^{2}}.
\end{align*}
The main theorem in this paper is the following.
\begin{theorem}{\label{theorem}}
 Assume the hypotheses (H.1)--(H.4). Then the inequality
\begin{align*}
  E^V \leq E^0 + e_0&
\end{align*}
holds. In particular, if $e_0<0$, then $E_\mathrm{bin} \geq -e_0 > 0$.
\end{theorem}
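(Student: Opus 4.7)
The plan is to prove $E^V \le E^0 + e_0$ by constructing, for each $\vep > 0$, a normalized trial vector $\Psi$ in the form domain of $\bar H^V$ with $(\Psi | \bar H^V \Psi) \le E^0 + e_0 + O(\vep)$, and invoking the variational principle. The construction, inspired by Griesemer--Lieb--Loss, pairs a real near-minimizer $\psi \in \cD_1$ of $h$ (used to localize the particle) with a near-ground state of $\bar H^0$ sitting in the appropriate fiber over the total momentum $\bP$.

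By (H.2), $\bar H^0$ commutes with $\bP$ and therefore admits a direct integral decomposition $\bar H^0 = \int^\oplus H^0(\bP)\,d\bP$ on $L^2(\BR^d_\bP;\cK)$, realized by the unitary $W := (\cF_\bx \otimes I)\,U$, where $(U\Psi)(\bx) := e^{-i\bx\cdot\bP_f}\Psi(\bx)$ (the Lee transformation) and $\cF_\bx$ is the Fourier transform in $\bx$. Under $W$, multiplication by $e^{i\bk\cdot\bx}$ corresponds to translation of $\bP$ by $-\bk$, hence $\Omega(\bk) = \int^\oplus H^0(\bP+\bk)\,d\bP$, and (H.3) becomes the fiberwise inequality
\begin{equation*}
\tfrac{1}{2}\bigl(H^0(\bP+\bk)+H^0(\bP-\bk)\bigr) \le H^0(\bP) + K(\bk) \quad\text{on } \cK,
\end{equation*}
valid for a.e.\ $\bP \in \BR^d$. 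In particular $E^0 = \inf_\bP E(\bP)$ where $E(\bP) := \inf \sigma(H^0(\bP))$.

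Fix $\vep > 0$ and pick: $\bP^* \in \BR^d$ with $E(\bP^*) \le E^0 + \vep$; a unit vector $\eta \in \cK$ with $(\eta | H^0(\bP^*)\eta) \le E(\bP^*)+\vep$; and real $\psi \in \cD_1$ (using closure of $\cD_1$ under complex conjugation from (H.4)) with $\|\psi\|_{L^2}=1$ and $(\psi|h\psi) \le e_0+\vep$. Define the trial state by its fiber components $\Psi(\bP) := \hat\psi(\bP - \bP^*)\eta$; in the original representation this is $\Psi(\bx) = \psi(\bx)\,e^{i\bP^*\cdot\bx}\,e^{i\bx\cdot\bP_f}\eta$. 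Since $e^{i\bx\cdot\bP_f}$ is unitary on $\cK$, one has $\|\Psi\| = 1$ and $(\Psi|V\Psi) = (\psi|V\psi)$. The fiber representation gives
\begin{equation*}
(\Psi|\bar H^0\Psi) = \int |\hat\psi(\bP - \bP^*)|^2\,(\eta | H^0(\bP)\eta)\,d\bP;
\end{equation*}
applying the fiberwise (H.3) at $\bP^*$, and symmetrizing $\bk \leftrightarrow -\bk$ using $|\hat\psi(\bk)|^2 = |\hat\psi(-\bk)|^2$ (from realness of $\psi$), this is bounded by $(\eta | H^0(\bP^*)\eta) + (\psi | K(\bp)\psi) \le E^0 + 2\vep + (\psi | K(\bp)\psi)$. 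Summing yields $(\Psi|\bar H^V\Psi) \le E^0 + e_0 + 3\vep$; letting $\vep \to 0$ concludes.

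The principal obstacle will be the passage between the formal construction --- which relies on the fiber decomposition and on an arbitrary vector $\eta \in \cK$ --- and the form-domain setting of (H.1)--(H.4): (i) the fiberwise version of (H.3) must be rigorously extracted from its statement on $\cD_0$, and (ii) the trial state $\Psi$ must be shown to lie in the form domain of $\bar H^V$, or to be approximable in form norm by vectors from $\cD_0$. For (ii), the multiplicative stability of $\cD_0$ under $\cD_1$ afforded by (H.4), combined with a density argument approximating $\eta$ by suitable vectors, should suffice. Once both technical points are addressed, the main inequality follows from the computation above.
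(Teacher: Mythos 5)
There is a genuine gap here, and it is concentrated exactly in the two ``technical points'' you defer at the end. Your route --- diagonalize $\bar H^0$ as a direct integral $\int^\oplus H^0(\bP)\,d\bP$ over the total momentum, pick a near-optimal fiber $\bP^*$ and a near-ground vector $\eta\in\cK$ there, and form $\Psi(\bx)=\psi(\bx)\e^{\i\bP^*\cdot\bx}\e^{\i\bx\cdot\bP_f}\eta$ --- is the ``single-fiber'' version of the Griesemer--Lieb--Loss argument, and in the abstract quadratic-form setting of (H.1)--(H.4) neither deferred point closes easily. For (ii): hypothesis (H.4) only guarantees $f(\bx)\Phi\in\cD_0$ for $\Phi\in\cD_0$ and $f\in\cD_1$; a general unit vector $\eta$ in the fiber space $\cK$ over $\bP^*$, or even one in $Q(H^0(\bP^*))$, has no a priori relation to $\cD_0$, and approximating it in the form norm of $H^0(\bP^*)$ by fiber components of $\cD_0$-vectors would require knowing that $\cD_0$ fibers into form cores of almost every $H^0(\bP)$ --- which is not among the hypotheses and does not follow from them. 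For (i): (H.3) is a form inequality on $\cD_0$ for each fixed $\bk$; converting it into a fiberwise inequality valid at your chosen $\bP^*$ for a.e.\ $\bk$ needs a Fubini/measurable-selection argument together with the decomposability of the forms $\Omega(\pm\bk)$, and $\bP^*$ must then be selected inside an intersection of full-measure sets. None of this is fatal in concrete models, but as written the proof of the abstract theorem is incomplete, and the second point is the one that may genuinely fail without additional structure.

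The paper's proof is engineered precisely to avoid both issues, and the device is worth internalizing: instead of concentrating in one fiber, it works in the extended Hilbert space $\cH_{\rm ex}=L^2(\BR^d;d\by)\tensor\cH$, on which $\inf\sigma(I\tensor\bar H^V)=E^V$ still holds, and uses the $\by$-indexed family $\Phi_\by:=f(\bx)\,\e^{\i\by\cdot\bP}F$, where $F\in\cD_0$ is any normalized near-minimizer of $H^0$ (no fiber decomposition, no $\bP^*$, no $\eta$) and $f\in\cD_1$ is your real $\psi$. Each $\Phi_\by$ lies in $\cD_0$ directly by (H.4) and the invariance $\e^{\i\by\cdot\bP}\cD_0=\cD_0$ from (H.2), so the form-domain issue disappears. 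A change of variables and the commutation in (H.2) give the identity $\int_{\BR^d}(\Phi_\by|H^0\Phi_\by)\,d\by=\int_{\BR^d}|\hat f(\bk)|^2\,(F|\Omega(\bk)F)\,d\bk$, after which the symmetrization in $\bk$ (using that $f$ is real) and (H.3) are applied \emph{directly as form inequalities on $\cD_0$ tested against $F$} --- no fiberwise extraction is needed. The resulting bound $(\Phi|I\tensor H^V\Phi)\le(F|H^0F)+(f|(K(\bp)+V)f)\le E^0+e_0+2\ep$ is the same algebra as yours, but the averaging over $\by$ in the auxiliary variable is what lets the argument run under the stated hypotheses alone. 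If you want to keep your fiber-theoretic picture, you should either add hypotheses ensuring that $\cD_0$ fibers into form cores and that (H.3) decomposes fiberwise, or switch to the extended-space trial state.
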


\section{Proof of Theorem \ref{theorem}}{\label{proof}}
For arbitrary small  $\ep$, we choose normalized vectors $F\in \cD_0$, $f \in \cD_1$ 
such that 
\begin{align*}
  (F|H^0 F)_{\cH} & \leq E^0 + \ep, \\
  (f|h f)_{L^{2}} & \leq e_0 + \ep.
\end{align*}
Since  by (H.4) $h$ commutes with  the  complex conjugation, the function $f$ can be chosen
to be real.
We consider the following extended Hilbert space 
\begin{align*}
  \cH_\mathrm{ex} := L^2(\BR^d; d\by)\tensor \cH,
\end{align*}
which naturally identified with the sets of $\cH$-valued square integrable functions
$L^2((\BR^d;d\by);\cH)$.
For $\by\in\BR^d$, we set $F_\by:=e^{\i\by\cdot \bP}F$ and  consider the $\cH-$valued function:
\begin{align*}
 \Phi : \BR^d\ni \by \mapsto \Phi_\by := f(\bx) F_\by \in \cH .
\end{align*}
The theorem will follow easily from the following three claims:
\begin{equation}
\label{e.0}
\Phi\in \cH_{\rm ex}, \| \Phi\|=1,
\end{equation}
\begin{equation}
\label{e.00}
\Phi \in Q(I\tensor H^{0}), \ (\Phi| I\tensor H^{0}\Phi)_{\cH_{\rm ex}}\leq (F| H^{0}F)_{\cH}+ (f| K(\bp)f)_{L^{2}},
\end{equation}
\begin{equation}
\label{e.000}
\Phi \in Q(I\tensor V), \ (\Phi | I\tensor V\Phi)_{\cH_{\rm ex}}= (f| Vf)_{L^{2}}.
\end{equation}
Let us first prove (\ref{e.0}), (\ref{e.00}) and (\ref{e.000}).  We have:
\begin{align*}
 \int_{\BR^d} \norm{\Phi_\by}_\cH^2 d\by 
&= 
 \int_{\BR^d} \norm{f(\bx)e^{\i\by\cdot \bP}F}_\cH^2 d\by 
  =
\int_{\BR^d} \norm{e^{-\i\by\cdot \bp}f(\bx)e^{\i\by\cdot \bp}F}_\cH^2 d\by   \\
&= \int_{\BR^d} |f(\bx-\by)|^2 d\by  \norm{F}_\cH^2 
= \norm{f}_{L^2(\BR^d)}^2 \cdot  \norm{F}_\cH^2=1,
\end{align*}
which proves (\ref{e.0}). Since $H^{0}$ is bounded below, (\ref{e.00}) will follow from 
\begin{equation}\label{e.1}
(\Phi | I\otimes H^{0}\Phi)_{\cH_\mathrm{ex}}= \int_{\BR^{d}}(\Phi_{\by}|H^{0}\Phi_{\by})_{\cH}d\by\leq (F| H^{0}F)_{\cH}+ (f| K(\bp)f)_{L^{2}},
\end{equation}
using that $F\in Q(H^{0})$ and $f\in Q(K(\bp))$.

Denoting by $\cF: L^2(\BR^d;d\by)\ni f\mapsto\hat{f}\in  L^2(\BR^d;d\bk)$ the unitary Fourier transform, we have:
\[
\begin{array}{rl}
&\int_{\BR^{d}}(\Phi_{\by}| H^{0}\Phi_{\by})_{\cH}d\by=\int_{\BR^{d}}(f(\bx)F_{\by}|H^{0}f(\bx)F_{\by})_{\cH}d\by\\[2mm]
=&\int_{\BR^{d}}(\e^{-\i \by\cdot \bP}f(\bx)\e^{\i \by\cdot \bP}F| \e^{-\i \by\cdot \bP}H^{0}\e^{\i \by\cdot \bP}\e^{-\i \by\cdot \bP}f(\bx)\e^{\i \by\cdot \bP}F)_{\cH}d\by\\[2mm]
=&\int_{\BR^{d}}(f(\bx-\by)F| H^{0}f(\bx-\by)F)_{\cH}d\by\\[2mm]
=&\int_{\BR^{d}}(\e^{\i \bk\cdot \bx}\hat{f}(\bk)F| H^{0}\e^{\i \bk\cdot \bx}\hat{f}(\bk)F)_{\cH}d\bk\\[2mm]
=&\int_{\BR^{d}}|\hat{f}(\bk)|^{2} (F| \Omega(\bk)F)_{\cH}d\bk.
\end{array}
\]
Since $f$ is real valued, we have:
\[
\begin{array}{rl}
&\int_{\BR^{d}}|\hat{f}(\bk)|^{2} (F| \Omega(\bk)F)_{\cH}d\bk\\[2mm]
=&\frac{1}{2}\int_{\BR^{d}}|\hat{f}(\bk)|^{2} (F| (\Omega(\bk)+ \Omega(-\bk)-2 \Omega(0))F)_{\cH}d\bk+ \|f\|^{2}(F|H^{0}F)_{\cH}\\[2mm]
\leq& \|F\|^{2}\int_{\BR^{d}}|\hat{f}(\bk)|^{2} K(\bk)d\bk+ \|f\|^{2}(F| H^{0}F)_{\cH}\\[2mm]
=&(F| H^{0}F)_{\cH}+ (f| K(\bp)f)_{L^{2}},
\end{array}
\]
which proves (\ref{e.00}).

Similarly we have
\[
\begin{array}{rl}
& (\Phi|I\tensor V \Phi)_{\cH_{\mathrm{ex}}}=\int_{\BR^{d}}(f(\bx)F_{\by}| V(\bx)f(\bx)F_{\by})_{\cH}d\by\\[2mm]
=&\int_{\BR^{d}}(\e^{-\i \by\cdot \bP}f(\bx)F_{\by}| \e^{- \i \by\cdot \bP}V(\bx)f(\bx)F_{\by})_{\cH}d\by\\[2mm]
=&\int_{\BR^{d}}(f(\bx-\by)F| V(\bx-\by)f(\bx-\by)F)_{\cH}d\by\\[2mm]
=&(f| V f)_{L^{2}}\|F\|^{2}= (f|Vf)_{L^{2}},
\end{array}
\]
which proves (\ref{e.000}).
From (\ref{e.0}), (\ref{e.00}) and (\ref{e.000}) we obtain
\[
\begin{array}{rl}
E^{V}\leq (\Phi| I\tensor H^{V}\Phi)_{\cH_{\mathrm{ex}}}\leq (F| H^{0}F)_{\cH}+ (f| (K(\bp)+ V)f)_{L^{2}}\leq E^{0}+ e_{0}+ 2\epsilon.
\end{array}
\]
Since $\epsilon$ is arbitrary we obtain the theorem. 
\qed

\section{Examples}
In this section we give some examples to which Thm. \ref{theorem} can be applied.  If $\mathfrak{h}$ is a Hilbert space, we denote by 
\[
\Gamma_{\rm s}(\mathfrak{h})= \bigoplus_{n=0}^\infty\otimes_{\rm s}^{n}\mathfrak{h}
\]
the bosonic Fock space over $\mathfrak{h}$. The vacuum vector  in $\Gamma_{\rm s}(\mathfrak{h})$ will be dentoed by $\Omega$, $a^{*}(h)$, $a(h)$ for $h\in \mathfrak{h}$ denote the  creation/annihilation operators.
\subsection{Semi-relativistic Pauli-Fierz Hamiltonians}
 The {\em semi-relativistic Pauli-Fierz Hamiltonian} is defined as follows: we take $d=3$ and
\begin{align*}
 & \cK = \Gamma_{\rm s}\left( L^2(\BR^3\times \{1,2\})\right), \\
& H^V = H_\mathrm{SRPF}^V := \sqrt{(\bp\tensor I + \sqrt{\alpha}\bA(\bx))^2 + m^2} -m +I\tensor H_f + V\tensor I,
\end{align*}
where $\alpha\in\BR$ is a coupling constant and $m>0$ is the mass of the electron (see \cite{HS10}).
The quantized vector potential $\bA(\bx)$ is defined by
\begin{align}
 \bA(\bx) = \frac{1}{\sqrt{2}} \sum_{\lambda=1,2} \int_{\BR^3} d\bk 
              \frac{\Lambda(\bk)}{|\bk|^{1/2}}\be^{(\lambda)}(\bk)
                (e^{\i\bk\cdot\bx}a_\lambda(\bk) + e^{-\i\bk\cdot\bx}a_\lambda(\bk)), \label{a(x)}
\end{align}
where $a_\lambda^*(\bk), a_\lambda(\bk)$ are creation and annihilation operators on $\cK$,
 $\Lambda$ is a real-function such that $\Lambda, |\bk|^{-1/2}\Lambda \in L^2(\BR^3)$
and  the polarization vectors $\be^{(\lambda)}:\BR^3\to\BR^3$ satisfy
\begin{align*}
  \be^{(\lambda)}(\bk) \cdot \be^{(\lambda')}(\bk) = \delta_{\lambda,\lambda'}, \qquad 
  \be^{(\lambda)}(\bk) \cdot \bk = 0.
\end{align*}
The free photon energy $H_f$ is defined by
\begin{align*}
  H_f = \sum_{\lambda=1,2} \int_{\BR^3}   |\bk| a^*_\lambda(\bk) a_\lambda(\bk)d\bk
\end{align*}
Let
\begin{align*}
 \cF_\mathrm{fin} := \cL[\{a^*(f_1)\cdots a^*(f_n)\Omega_\mathrm{photon}, \Omega_\mathrm{photon}
                                     | f_j \in C_0^\infty(\BR^3\times\{1,2\}), j=1,2,\dots, n, n\in \BN \} ]
\end{align*}
be a finite photon subspace where $\Omega_\mathrm{photon} =(1,0,0,\dots)\in \cK$.
We set 
\begin{align}
& \cD_0 = C_0^\infty(\BR^3) \hat\tensor \cF_\mathrm{fin},\\
& \bP_f = \sum_{\lambda=1,2}\int_{\BR^3} \bk a^*_\lambda(\bk) a_\lambda(\bk) d\bk
\end{align}
where $\hat\tensor $ indicates the algebraic tensor product.
Then the above operator satisfy the condition (H.1) and (H.2).
Moreover, it is proved that (H.3) holds with $K(\bk) = \sqrt{\bk^2+m^2}-m$ (see \cite{HS10}).
We assume that $V\in L_\mathrm{loc}^1(\BR^3;d\bx)$ and set $\cD_1 = C_0^\infty(\BR^3)$.
Then (H.4) holds. Therefore $E_\mathrm{SRPF}^V\leq E_\mathrm{SRPF}^0 + e_0$ holds with
\begin{align*}
& E^\sharp_\mathrm{SRPF} := \inf_{\Psi\in \cD_0, \norm{\Psi}=1} (\Psi|H_\mathrm{SRPF}^V\Psi)_{\cH}, \qquad \sharp=V,0 \\
& e_0 = \inf_{f\in C_0^\infty, \norm{f}=1} (f| (\sqrt{\bp^2+m^2}-m +V)f)_{L^{2}}.
\end{align*}
\subsection{Pauli-Fierz Hamiltonian with dipole approximation}
The Pauli-Fierz Hamiltonian with dipole approximation is defined by
\begin{align}
& H_\mathrm{DP}^V= \frac{1}{2m}(\bp\tensor I+\sqrt{\alpha}\bA(0))^2 + I\tensor H_f + V\tensor I,
\end{align}
where $\bA(0)$ is defined in \eqref{a(x)} with $\bx=0$.
$H_\mathrm{DP}^V$ is defined on $\cD_0=C_0^\infty\hat\tensor \cF_\mathrm{fin}$.
Clearly (H.1) holds.
The operator $H_\mathrm{DP}^0$ is not translation invariant, but it preserves the particle
momentum $\bp$. Hence we set
\begin{align*}
  \bP_f = 0, \quad \bP = \bp.
\end{align*}
Then (H.2) holds. For this Hamiltonian, we have
\begin{align*}
 \frac{1}{2}(\Omega(\bk)+\Omega(-\bk)-2\Omega(0)) = \frac{\bk^2}{2m},
\end{align*}
which implies that (H.3) holds with $K(\bk) = \bk^2/2m$.
We assume that $V\in L_\mathrm{loc}^1(\BR^3)$ and set $\cD_1 = C_0^\infty(\BR^3)$.
Then (H.4) holds. Therefore the inequality $E_\mathrm{DP}^V\leq E_\mathrm{DP}^0 + e_0$ holds with
\begin{align*}
& E^\sharp_\mathrm{DP} := \inf_{\Psi\in \cD_0, \norm{\Psi}=1} (\Psi|H_\mathrm{DP}^V\Psi)_{\cH}, \qquad \sharp=V,0 \\
 &e_0 = \inf_{f\in C_0^\infty, \norm{f}=1} (f|(\frac{\bp^2}{2m} +V)f)_{L^{2}}.
\end{align*}

\subsection{Nelson type Hamiltonians}
We define the Nelson type Hamiltonian by
\begin{align*}
 & \cK =\Gamma_{\rm s} (L^2(\BR^d)), \\
&  H^V = H_\mathrm{Nel}^V := B(\bp^2)\tensor I + I\tensor H_f + P(\phi(\bx)),
\end{align*}
where $B:\BR_+\to \BR_+$ is a Bernstein function, i.e.,
\begin{align}
  B(u) \geq 0, \quad B(0)=0, \quad   (-1)^n \frac{d^nB(u)}{du^n}  \geq 0, \quad n=1,2,\dots. \label{bern}
\end{align}
The field operator $\phi(\bx)$ is defined by
\begin{align*}
 \phi(\bx) = \frac{1}{\sqrt{2}}\int_{\BR^d}  (g(\bk)e^{-\i\bk\cdot\bx}a^*(\bk) + \overline{g(\bk)}e^{\i\bk\cdot\bx}a(\bk))d\bk
\end{align*}
with $g \in L^2(\BR^d)$,  $a^*, a$ are creation and annihilation operators on $\cK$ and $P$ is a real, bounded below  polynomial.

The free boson Hamiltonian $H_f$ is defined by
\begin{align*}
  H_f = \int_{\BR^d} \ome(\bk) a^*(\bk) a(\bk) d\bk,
\end{align*}
where $\ome$ is a non-negative function.
We refer the reader to \cite{LHB11} for a recent study of the Nelson-type Hamiltonians with Bernstein function type kinetic energy . We set 
\begin{align*}
& \cF_\mathrm{fin} := \cL[\{a^*(f_1)\cdots a^*(f_n)\Omega_\mathrm{b}, \Omega_\mathrm{b}
                                     | f_j \in C_0^\infty(\BR^3), j=1,2,\dots, n, n\in \BN \} ],\\
&  \cD_0 = C_0^\infty(\BR^d)\hat\tensor \cF_\mathrm{fin}
\end{align*}
where $\Omega_\mathrm{b}=(1,0,0,\dots)\in\cK$. Assume that $V \in L_\mathrm{loc}^1(\BR^d)$.
By \eqref{bern}, we have
\begin{align*}
  B(u) \leq \frac{u^3}{6} + B''(0)\frac{u^2}{2} + B'(0)u.
\end{align*}
Hence, $C_0^\infty(\BR^d)\subset {\rm Dom}(B(\bp^2))$.
Then $H_\mathrm{Nel}^V$ and $H_\mathrm{Nel}^0$ are well-defined on $\cD_0$ and (H.1) holds.
We set 
\begin{align*}
 \bP_f = \int_{\BR^d} \bk a^*(\bk)a(\bk) d\bk.
\end{align*}
Then, $H_\mathrm{Nel}^0$ commutes with $P_j=\overline{p_j\tensor I+I\tensor P_{f,j}}, j=1,\dots,d$ and (H.2) holds.
Next we check (H.3). We note that 
\begin{align*}
  \Omega(\bk) + \Omega(-\bk) -2 \Omega(0) = B((\bp+\bk)^2) + B((\bp-\bk)^2) -2B(\bp^2).
\end{align*}
We have the following lemma:
 \begin{lemma}{\label{lem}}
For all $\bp,\bk\in\BR^d$, the inequality
\begin{align*}
 \frac{1}{2}\left(B((\bp+\bk)^2) + B((\bp-\bk)^2) -2B(\bp^2)  \right)  \leq B(\bk^2).
\end{align*}
holds.
 \end{lemma}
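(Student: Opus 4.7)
The plan is to exploit the L\'evy--Khintchine representation of Bernstein functions, which reduces the inequality to an elementary pointwise estimate involving exponentials.

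First, I would invoke the standard representation of a Bernstein function satisfying $B(0)=0$: there exist a constant $b\geq 0$ and a nonnegative Borel measure $\mu$ on $(0,\infty)$ with $\int_{0}^{\infty}\min(1,t)\,\mu(dt)<\infty$ such that
\begin{equation*}
B(u) = bu + \int_{0}^{\infty} \bigl(1 - \e^{-tu}\bigr)\,\mu(dt),\qquad u\geq 0.
\end{equation*}
This is the crucial structural input and allows me to split the inequality into a linear piece and an integral piece.

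For the linear piece, a direct computation gives
\begin{equation*}
\tfrac{1}{2}\bigl(b(\bp+\bk)^{2} + b(\bp-\bk)^{2} - 2b\bp^{2}\bigr) = b\bk^{2},
\end{equation*}
so this part contributes exactly the linear component of the right-hand side $B(\bk^{2})$. Hence it suffices to prove the inequality with $B$ replaced by the integral term, and by Fubini it suffices to establish the pointwise estimate
\begin{equation*}
\tfrac{1}{2}\bigl(2\e^{-t\bp^{2}} - \e^{-t(\bp+\bk)^{2}} - \e^{-t(\bp-\bk)^{2}}\bigr) \leq 1 - \e^{-t\bk^{2}}
\end{equation*}
for every $t>0$ and all $\bp,\bk\in\BR^{d}$, and then integrate against $\mu(dt)$.

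The pointwise estimate is where the work lies, but it is elementary. The key identity is $(\bp+\bk)^{2} + (\bp-\bk)^{2} = 2\bp^{2} + 2\bk^{2}$, which by AM--GM yields
\begin{equation*}
\e^{-t(\bp+\bk)^{2}} + \e^{-t(\bp-\bk)^{2}} \geq 2\sqrt{\e^{-t(\bp+\bk)^{2}}\,\e^{-t(\bp-\bk)^{2}}} = 2\e^{-t\bp^{2}}\e^{-t\bk^{2}}.
\end{equation*}
Substituting back, the left-hand side of the desired inequality is at most $\e^{-t\bp^{2}}(1-\e^{-t\bk^{2}})$, which is clearly bounded by $1-\e^{-t\bk^{2}}$ since $0<\e^{-t\bp^{2}}\leq 1$ and $1-\e^{-t\bk^{2}}\geq 0$. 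No serious obstacle arises; the only conceptual step is recognizing that the L\'evy--Khintchine representation converts the global inequality into an $\e^{-tu}$ estimate where the parallelogram identity and AM--GM do all the work.
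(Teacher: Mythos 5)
Your proof is correct and follows essentially the same route as the paper: both invoke the L\'evy--Khintchine representation to cancel the linear part exactly and reduce the claim to the pointwise estimate for $u\mapsto 1-\e^{-tu}$, and your AM--GM step is the same inequality ($\e^{x}+\e^{-x}\geq 2$) that the paper applies after reducing to one dimension and writing $p_1=a\kappa$. Your coordinate-free phrasing via the parallelogram identity is marginally cleaner but not a different argument.
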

 \begin{proof}
It is known that any Bernstein function can be written in the form
\begin{align}
  B(u) = a + bu + \int_{\BR_+} (1-e^{-tu})\mu(dt), \quad (u\geq 0) \label{31}
\end{align}
where $a,b$ are  non-negative constants and $\mu$ is a non-negative measure on $\BR_+$ such that 
$\int_{\BR_+} \min\{t,1\} \mu(dt)<\infty$ (see \cite{LHB11}).
 Hence it is sufficient to prove the inequality
\begin{align}
  -e^{-(\bp+\bk)^2t}  - e^{-(\bp-\bk)^2t} + 2 e^{-\bp^2t} \leq 2(1-e^{-\bk^2t}),  \label{32}
\end{align}
for all $t\geq 0$ and  $p,k\in\BR^d$. If $t=0$, \eqref{32} is trivial.
Without loss of generality, one can set $t=1$. 
Moreover we can assume that $k=(\kappa,0,0)$, $\kappa\geq 0$ by the spherical symmetry of \eqref{32}. 
Then \eqref{32} will  follow from 
\begin{align*}
  b_{\kappa}(p_1):= -e^{-(p_1+\kappa)^2} - e^{-(p_1-\kappa)^2} + 2 e^{-{p_1}^2} \leq 2(1-e^{-\kappa^2}),
\end{align*}
where $\bp=(p_1,p_2,p_3)$. It is enough to show that $b_\kappa(p_1) \leq 2(1-e^{-\kappa^2})$ for $\kappa> 0$ and $p_1>0$.
We set $p_1=a\kappa$ with $a> 0$. Then
\begin{align*}
  b_\kappa(p_1)  &= e^{-a^2\kappa^2}\left[-e^{-\kappa^2}(e^{-2a\kappa^2}+e^{2a\kappa^2})+2\right] \\
          & \leq e^{-a^2\kappa^2}\left[-2e^{-\kappa^2} + 2\right] \\
          & \leq 2(1-e^{-\kappa^2}),
\end{align*}
where we used the inequality $e^{-2a\kappa^2}+e^{2a\kappa^2}\geq 2$ and $e^{-a^2\kappa^2}\leq 1$.
 \end{proof}
Lemma \ref{lem} implies that (H.3) holds with $K(\bk) = B(\bk^2)$.
By setting $\cD_1 = C_0^\infty(\BR^d)$, (H.4) holds. Therefore, by Theorem \ref{theorem}, $E^V_\mathrm{Nel} \leq E^0_\mathrm{Nel} +e_0$ holds with
\begin{align*}
& E^\sharp_\mathrm{Nel} := \inf_{\Psi\in \cD_0, \norm{\Psi}=1} (\Psi|H_\mathrm{Nel}^V\Psi)_{\cH}, \qquad \sharp=V,0 \\
 & e_0 := \inf_{f\in \cD_1, \norm{f}=1} (f|(B(\bp^2)+V)f)_{L^{2}}.
\end{align*}


\begin{thebibliography}{1}

\bibitem{HS10}
F.~Hiroshima and I.~Sasaki, \emph{On the ionization energy of semi-relativistic
  Pauli-Fierz model for a single particle}, Kokyuroku Bessatsu \textbf{B21}
  (2010), 25--34.

\bibitem{LHB11}
J.~L{\H{o}}rinczi, F.~Hiroshima, and V.~Betz, \emph{{F}eynman-{K}ac-type
  theorems and {G}ibbs measures on path space}, vol.~34, Walter De Gruyter,
  2011, Seminar on Probability, Studies in Mathematics.

\bibitem{Griesemer-Lieb-Loss:2001}
E.~H.~Lieb M.~Griesemer and M.~Loss, \emph{Ground states in non-relativistic
  quantum electrodynamics}, Invent Math \textbf{145} (2001), no.~1, 557--595.

\end{thebibliography}
\providecommand{\bysame}{\leavevmode\hbox to3em{\hrulefill}\thinspace}
\providecommand{\MR}{\relax\ifhmode\unskip\space\fi MR }
\providecommand{\MRhref}[2]{%
  \href{http://www.ams.org/mathscinet-getitem?mr=#1}{#2}
}
\providecommand{\href}[2]{#2}

\end{document}